\begin{document}

\parindent 0pt

\parskip   0.3\baselineskip


\newcommand{\shrink}{\kern -0.75\baselineskip}

\newtheorem{theorem}{Theorem}
\newtheorem{lemma}[theorem]{Lemma}
\newtheorem{corollary}{Corollary}[theorem]
\newtheorem{proposition}{Proposition}
\newtheorem{property}{Property}
\newdefinition{rmk}{Remark}
\newdefinition{conj}{Conjecture}
\newdefinition{definition}{Definition}
\newdefinition{example}{Example}
\newproof{proof}{Proof}

\newcommand{\peo}{{\em peo}}
\usetikzlibrary{trees,positioning,arrows,shapes,snakes}


\begin{frontmatter}
\title{Integer Laplacian Eigenvalues \\ of Chordal Graphs}
\author{Nair Maria Maia de Abreu}
\address{ PEP-COPPE - Universidade Federal do Rio de Janeiro\\ 
Centro de Tecnologia, Bloco F, sala F105, Cidade Universit\'aria, 
  RJ, Brazil} 
  \ead{nairabreunovoa@gmail.com}
  \author{Claudia Marcela Justel\corref{cor1}}
\address{ Departamento de Engenharia de Computa\c c\~ao - Instituto Militar de Engenharia \\  Pra\c ca General Tib\'urcio 80, Praia Vermelha, 22290-270, 
  RJ, Brazil} 
  \ead{cjustel@ime.eb.br}
\author{Lilian Markenzon}
\address{ NCE - Universidade Federal do Rio de Janeiro\\ 
Av. Athos da Silveira, 274, Pr\'edio do CCMN,  Cidade Universit\'aria, 21941-611,  RJ, Brazil} 
 \ead{markenzon@ince.ufrj.br}
 \cortext[cor1]{Corresponding author}

\begin{abstract}
In this paper, structural properties of chordal graphs are analysed, in order to
establish a relationship between these structures and integer Laplacian eigenvalues.
We  present the characterization of chordal graphs with equal 
vertex and algebraic connectivities, by means of the vertices that compose the minimal vertex separators of the graph;
we stablish a sufficient condition for the cardinality of a maximal  clique to appear as an integer 
Laplacian eigenvalue.
Finally, we review two subclasses of chordal graphs, showing for them some new properties.
 \end{abstract}

\begin{keyword}
chordal graphs, structural properties, algebraic connectivity, integer Laplacian eigenvalues 
 \end{keyword}

\end{frontmatter}


\section{Introduction}\label{intro}

``Is it  possible, looking only at a graph structure, to determine its integer Laplacian eigenvalues?''
In this paper we address this question for chordal graphs.

Structural properties of  chordal graphs allow the development of efficient
solutions for many theoretical and algorithmic problems. 
In this context, its clique-based structure and  the minimal vertex
separators play a decisive role. 
Analysing these structures  we present new results about
integer Laplacian eigenvalues of chordal graphs and some subclasses.
Firstly, we characterize the chordal graphs with equal vertex and algebraic connectivities.
Then  we revisit  an integral Laplacian subclass, the quasi-threshold graphs \cite{YCC96}, and,
based on results from \cite{BLP08},
we show the relation between its structural and spectral properties.
The tree representation of a  quasi-threshold graph  is analysed,
allowing the determination of its Laplacian eigenvalues expressed in terms of maximal cliques and simplicial vertices.
Another interesting subclass is restudied: the $(k,t)$-split graphs \cite{F09,K10} and its properties raised.
Finally,  the importance of simplicial vertices in the existence of  integer Laplacian eigenvalues is established.


\section{Basic concepts}

Let  $G=(V,E)$ (or $G= (V(G), E(G))$)  be a connected graph, 
where $|E|=m$ is its {\em size} and
 $|V| = n $ is its {\em order}. 
 The {\em set of neighbors\/} of a vertex $v \in V$ is denoted by
$N(v) = \{ w \in V; \{v,w\} \in E\}$
 and its {\em closed neighborhood} by  $N_G[v] = N_G(v)\cup \{v\} $.
Two vertices $u$ and $v$   are  {\em false twins} in $G$ if  $N_G(u) = N_G (v)$, 
and  {\em true twins} in $G$  if $N_G[u] = N_G [v]$.
For any $S \subseteq V$, 
the subgraph of $G$ induced by $S$ is denoted $G[S]$. 
 If $G[S]$ is a complete subgraph then $S$ is a \emph{clique} in $G$. 
 The complete graph on $n$ vertices is denoted by $K_n$.
A vertex $v$ is said to be {\em
simplicial\/} in $G$ when $N(v)$ is a clique in $G$.

\bigskip

The Laplacian matrix of a graph $G$ of order $n$ is defined as $L(G) = D(G) - A(G)$, 
where $D(G) = diag(d_1,...,d_n)$ denote the diagonal degree matrix and $A(G)$ the adjacency matrix of $G$.
As $L(G)$ is symmetric, there are $n$ eigenvalues of $L(G)$. 
We denote the eigenvalues of $L(G)$, called the \emph{Laplacian eigenvalues}  of $G$, 
by $\mu_1(G)\geq \dots \geq \mu_n(G)$. 
Since $L(G)$ is positive semidefinite, $\mu_n(G)=0$ .
Fiedler \cite{F73} showed that $G$ is a connected graph if and only if $\mu_{n-1}(G) > 0$; 
this eigenvalue is called \emph{algebraic connectivity} and
it is denoted by $a(G)$. Moreover, Fidler proved that for $G \not = K_n$, $a(G) \leq \kappa(G)$.
All different  Laplacian eigenvalues of $G$ together with their multiplicities form the 
Laplacian spectrum  of $G$,  denoted by $Spec L(G)$. 
A graph is called \emph{Laplacian integral}  if its $L$-spectrum consists of integers.

\subsection{Chordal graphs}

A chordal graph is a graph in which every cycle of length four and greater has a cycle chord. 
Basic concepts about  chordal graphs are assumed to be known and 
can be found  Blair and Peyton \cite{BP93}  and Golumbic \cite{Go04}.
Following the most pertinent concepts are reviewed.

A subset $S \subset V$ is
a {\em separator} of $G$ if at least two vertices in the same connected
component of $G$ are in two distinct connected components of
$G[V\setminus S]$. 
The set $S$ is a {\em minimal separator} of $G$ if $S$ is a
separator and no proper set of $S$ separates the graph.
The {\em vertex connectivity} of $G$, $\kappa(G)$, is defined as the minimum cardinality of a separator of  $G$.

Let $G = (V, E)$ be a chordal graph and $u,v  \in V$. 
A subset $S \subset V$  is a {\em vertex separator}  for
non-adjacent vertices $u$  and $v$  (a $uv$-separator) if the
removal of $S$ from the graph separates $u$ and $v$  into distinct
connected components. 
If no proper subset of $S$  is a $uv$-separator then $S$ is a {\em minimal $uv$-separator}. 
When the pair of vertices remains unspecified, we refer to $S$  as a {\em
minimal vertex separator} ({\em mvs}). 
The set of minimal vertex separators is denoted by $\mathbb S$.
A minimal separator is always a minimal vertex separator
but the converse is not true.

A {\em clique-tree} of $G$ is defined as a tree $T$  whose vertices 
are the maximal cliques of $G$ such that for every  two maximal cliques $Q$ and $Q^\prime$
each clique in the path from $Q$ to $Q^\prime$ in $T$ contains $Q\cap Q^\prime$. 
The set of maximal cliques of $G$ is denoted by $\mathbb{Q}$.
For a chordal graph $G$ 
and a clique-tree $T=(\mathbb{Q}, E_T)$,
a set $S\subset V$ is a minimal vertex separator of $G$ if
and only if $S= Q\cap Q' $ for some edge $\{Q, Q'\}\in E_T$. 
Moreover, the multiset  ${\mathbb M}$ of
the minimal vertex separators of $G$ is the same for every
clique-tree of $G$.
The {\em multiplicity} of the minimal vertex separator $S$, denoted by
$\mu(S)$, is the number of times that $S$ appears in  ${\mathbb M}$. 
The set of minimal vertex separators  of $G$ is denoted by $\mathbb{S}$.
The determination of the minimal vertex separators and their multiplicities 
can be performed in linear time  \cite{MP10}.   

If $G_1=(V_1,E_1)$ and $G_2=(V_2,E_2)$ are graphs on disjoint set of vertices, 
their {\em graph sum} is $G_1 + G_2= (V_1 \cup V_2, E_1 \cup E_2).$
The {\em join} $G_1 \nabla G_2$ of $G_1$ and $G_2$ is a graph obtained from $G_1 + G_2$ 
by adding new edges from each vertex in $G_1$ to all vertices of $G_2$. 

 If $A$ is a finite $n$-element set then a {\em chain} is a collection of subsets 
$B_1, B_2,\ldots,B_k$   of $A$ such that for all $i,j\in \{1,2,...,k\}$ where 
$i\not= j$ we have that $B_i \subset B_j$    or   $B_j \subset B_i$.

\subsection{Some other graph classes}

Some other graph classes  will be mentionned in this paper.
Their definitions and characterizations can be found in \cite{BLS99}.

A graph $G$ is a {\em comparability graph}  if it transitively orientable, i.e. 
its edges can be directed such that if $a \rightarrow b$ and $b\rightarrow c$ are directed edges, then $a\rightarrow c$ is a directed edge. 

A graph $G$ is a {\em cograph} (short for complement-reducible graph) if one of the following equivalent conditions hold: 

\begin{itemize}
\item $G$ can be constructed from isolated vertices by disjoint union and join operations.
\item $G$ is $P_4$-free.
\end{itemize}

A \emph{split graph} is a graph $G=(V,E)$ if $V$ can be partitioned as the disjoint union of an
independent set and a clique. 
Split graphs are chordal graphs.

\section{Chordal graphs with $\kappa(G)=a(G)$}

In this section we characterize chordal graphs that have equal vertex and algebraic conectivities.
Our result particularizes an important result from Kirkland {\em et al.} \cite{K02},
allowing us to recognize this property based on the graph structure.

\begin{theorem}\label{theo:kirkland}{\rm\cite{K02}}
Let $G$ be a non-complete connected graph on $n$ vertices. 
Then $\kappa(G)=a(G)$ if and only if $G$ can be written as $G_1 \nabla G_2$, where $G_1$ is a disconnected graph
on $n - \kappa(G)$ vertices and $G_2$ is a graph on $\kappa(G)$ vertices with $a(G_2) \geq 2 \kappa(G) - n$.
\end{theorem}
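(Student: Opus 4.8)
The plan is to derive the theorem from two ingredients already available here: the inequality $a(G)\le\kappa(G)$ (valid because $G\ne K_n$) and the standard description of the Laplacian spectrum of a join. First I would record the latter in the form I need: if $H=H_1\nabla H_2$ with $|V(H_i)|=n_i$ and $n=n_1+n_2$, then the Laplacian eigenvalues of $H$ are $0$, the number $n$, the numbers $n_2+\lambda$ for $\lambda$ a nonzero Laplacian eigenvalue of $H_1$, and the numbers $n_1+\mu$ for $\mu$ a nonzero Laplacian eigenvalue of $H_2$; this is checked directly by verifying that $\mathbf 1$, the balanced vector $n_2\mathbf 1_{V(H_1)}-n_1\mathbf 1_{V(H_2)}$, the eigenvectors of $L(H_1)$ padded with zeros on $V(H_2)$, and the eigenvectors of $L(H_2)$ padded with zeros on $V(H_1)$ are eigenvectors of $L(H)$. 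The one consequence I will use repeatedly is: if $n_1\ge 2$ then $a(H)=\min\{\,n_2+a(H_1),\ n_1+a(H_2)\,\}$, where the second term is understood to be $n$ when $n_2=1$.

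For the direction ``$\Leftarrow$'' I would argue as follows. Suppose $G=G_1\nabla G_2$ with $G_1$ disconnected, $|V(G_2)|=k$ and $a(G_2)\ge 2k-n$; then $|V(G_1)|=n-k\ge 2$ and $a(G_1)=0$. Deleting $V(G_2)$ from $G$ leaves the disconnected graph $G_1$, so $V(G_2)$ is a separator and $\kappa(G)\le k$. The join formula gives $a(G)=\min\{\,k,\ (n-k)+a(G_2)\,\}=k$, since the hypothesis $a(G_2)\ge 2k-n$ is exactly what forces $(n-k)+a(G_2)\ge k$. Then $k=a(G)\le\kappa(G)\le k$, so $\kappa(G)=k=a(G)$, as wanted (with $k=1$, so $G_2=K_1$, the same computation goes through under the convention above).

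The real work is in ``$\Rightarrow$'', and the following is the step I expect to be the main obstacle: producing, from the mere equality $a(G)=\kappa(G)$, the combinatorial conclusion that some minimum separator is completely joined to its complement. Assume $a(G)=\kappa(G)=k$ and fix a minimum separator $S$, so $|S|=k$ and $G[V\setminus S]$ has components $C_1,\dots,C_r$ with $r\ge 2$. Write $A=C_1$, $B=(V\setminus S)\setminus C_1$, $a=|A|\ge 1$, $b=|B|\ge 1$ (so $a+b=n-k$), and let $e_1,e_2$ be the numbers of edges of $G$ joining $S$ to $A$ and to $B$. The test vector is $x$ with value $b$ on $A$, $-a$ on $B$, and $0$ on $S$: it is orthogonal to $\mathbf 1$, and since $G$ has no edge between $A$ and $B$ one gets $x^{\top}L(G)x=e_1b^2+e_2a^2$ and $x^{\top}x=ab(n-k)$. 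Using only the trivial bounds $e_1\le ka$ and $e_2\le kb$,
\[
a(G)\ \le\ \frac{e_1b^2+e_2a^2}{ab(n-k)}\ \le\ \frac{ka\,b^2+kb\,a^2}{ab(n-k)}\ =\ k.
\]
Since $a(G)=k$, both inequalities are equalities; the second one reads $(ka-e_1)b^2+(kb-e_2)a^2=0$ with both summands nonnegative and $a,b\ge 1$, hence $e_1=ka$ and $e_2=kb$ — which says precisely that every vertex of $S$ is adjacent to every vertex of $A\cup B=V\setminus S$. Therefore $G=G[V\setminus S]\,\nabla\,G[S]$; setting $G_1=G[V\setminus S]$ (disconnected, on $n-\kappa(G)$ vertices) and $G_2=G[S]$ (on $\kappa(G)$ vertices), a last application of the join formula to $k=a(G)=\min\{\,k,\ (n-k)+a(G_2)\,\}$ gives $(n-k)+a(G_2)\ge k$, i.e. $a(G_2)\ge 2\kappa(G)-n$.

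To summarize the difficulty: once the test vector above is found, everything else is bookkeeping with the join spectrum and with $a(G)\le\kappa(G)$; the delicate point is engineering a single $\mathbf 1$-orthogonal vector whose Rayleigh quotient simultaneously recovers Fiedler's bound via the separator $S$ and whose two slack estimates $e_1\le ka$, $e_2\le kb$ become equalities exactly when all cross edges between $S$ and $V\setminus S$ are present. A small separate check is the boundary case $\kappa(G)=1$, where $G_2$ is a single vertex and the condition $a(G_2)\ge 2\kappa(G)-n$ holds vacuously under the convention $a(K_1)=0$.
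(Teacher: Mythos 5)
Your argument is correct, but there is nothing in the paper to compare it against: Theorem~\ref{theo:kirkland} is imported verbatim from Kirkland, Molitierno, Neumann and Shader \cite{K02} and the paper gives no proof of it, using it only as the engine behind Theorem~\ref{theo:chordal}. What you have written is a complete, self-contained proof. The $\Leftarrow$ direction via the join spectrum and $a(G)\le\kappa(G)$ is routine and sound, and your $\Rightarrow$ direction correctly isolates the crux: the test vector $x$ equal to $b$ on one component $A$ of $G[V\setminus S]$, $-a$ on the rest, and $0$ on the minimum separator $S$, whose Rayleigh quotient is sandwiched between $k$ and $k$, forcing $e_1=ka$ and $e_2=kb$ and hence the join structure $G=G[V\setminus S]\nabla G[S]$. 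All the intermediate computations ($x\perp\mathbf 1$, $x^{\top}L(G)x=e_1b^2+e_2a^2$, $x^{\top}x=ab(n-k)$, and the equality analysis) check out. One small imprecision worth fixing: in your statement of the join spectrum, the eigenvalues contributed by $H_1$ are $n_2+\lambda$ where $\lambda$ runs over \emph{all Laplacian eigenvalues of $H_1$ except one copy of $0$}, not over the nonzero ones only; when $H_1$ is disconnected the retained zero eigenvalue produces the eigenvalue $n_2$ of the join, which is exactly what makes your formula $a(H)=\min\{n_2+a(H_1),\,n_1+a(H_2)\}$ correct. Since that is the formula you actually apply, nothing downstream is affected, but the preliminary sentence as phrased would miscount multiplicities.
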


\begin{theorem}\label{theo:chordal}
Let $G$ be a non-complete connected chordal graph.
Then $\kappa(G)=a(G)$ if and only if there is a minimal separator of $G$ such that

all its elements are universal vertices.
\end{theorem}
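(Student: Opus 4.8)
The plan is to reduce Theorem~\ref{theo:chordal} to Theorem~\ref{theo:kirkland} by translating the join decomposition $G = G_1 \nabla G_2$ into the language of minimal vertex separators via a clique-tree. First I would set $k = \kappa(G)$ and observe the key dictionary fact: a vertex $v$ of $G$ is universal (adjacent to all other $n-1$ vertices) if and only if $v$ lies in every maximal clique of $G$; equivalently, in a clique-tree $T$ of $G$, the vertex $v$ appears in all nodes of $T$, hence in $Q \cap Q'$ for every edge $\{Q,Q'\} \in E_T$, i.e. $v$ belongs to every minimal vertex separator. So the statement ``some minimal separator consists entirely of universal vertices'' needs to be connected to the existence of a join decomposition with the right parameters.

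For the forward direction, suppose $\kappa(G) = a(G)$. By Theorem~\ref{theo:kirkland}, $G = G_1 \nabla G_2$ with $G_1$ disconnected on $n - k$ vertices and $|V(G_2)| = k$. Every vertex of $G_2$ is adjacent to every vertex of $G_1$; I would argue that $V(G_2)$ is a separator of $G$ (removing it leaves $G_1$, which is disconnected), so $V(G_2)$ contains a minimal separator $S$ with $|S| \le k = \kappa(G)$, forcing $|S| = k$ and $S = V(G_2)$. It remains to check each vertex of $V(G_2)$ is universal in $G$: a vertex $w \in V(G_2)$ is adjacent to all of $V(G_1)$ by the join; within $G_2$, since $G_1$ is disconnected it has at least two components and hence (as $G$ is chordal, so $G_2$ is chordal too) one shows any non-neighbor of $w$ inside $G_2$ would create an induced $C_4$ using two vertices from distinct components of $G_1$ together with $w$ and that non-neighbor — contradiction. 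Hence $w$ is universal, and $S$ is a minimal separator all of whose elements are universal.

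For the converse, suppose $S$ is a minimal separator of $G$ whose elements are all universal vertices; put $k' = |S|$. I would show $\kappa(G) = k'$: no separator can be smaller, because removing fewer than $k'$ vertices leaves all still-present universal vertices, which dominate the graph and keep it connected, while $S$ itself is a separator, so $\kappa(G) = k'$. Now take $G_2 = G[S] = K_{k'}$ and $G_1 = G[V \setminus S]$; since every vertex of $S$ is universal, $G = G_1 \nabla G_2$, and $G_1$ is disconnected because $S$ is a separator. Finally verify the spectral side condition of Theorem~\ref{theo:kirkland}: $a(G_2) = a(K_{k'}) = k'$ (a standard Laplacian fact), and we need $k' \ge 2k' - n$, i.e. $n \ge k'$, which holds trivially. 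Theorem~\ref{theo:kirkland} then yields $\kappa(G) = a(G)$.

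\textbf{Main obstacle.} The delicate point is the forward direction's claim that $V(G_2)$ \emph{equals} a minimal separator and that its vertices are \emph{universal in all of $G$}, not merely dominating $G_1$. The join decomposition from Theorem~\ref{theo:kirkland} is not unique, and a priori $G_2$ could contain internal non-edges; I need chordality to rule out a vertex of $G_2$ missing a neighbor inside $G_2$ (via the induced-$C_4$ argument using two components of $G_1$), and I must make sure the minimal separator extracted from $V(G_2)$ is not a proper subset — this is where the cardinality bookkeeping $|S| \le \kappa(G) = k = |V(G_2)|$ does the work. Getting this squeeze argument clean, and correctly invoking that $G$ chordal implies induced subgraphs like $G_2$ are chordal, is the crux; the rest is routine translation between ``universal vertex,'' ``in every maximal clique,'' and ``in every minimal vertex separator.''
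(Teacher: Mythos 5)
Your proposal is correct and takes essentially the same route as the paper: both directions are reductions to Theorem~\ref{theo:kirkland}, with the forward direction squeezing a minimal separator out of $V(G_2)$ and the converse building the join decomposition $G[V\setminus S]\nabla G[S]$ and checking $a(K_{k'})=k'\ge 2\kappa(G)-n$. The only minor variation is that where the paper invokes the standard fact that minimal vertex separators of a chordal graph are cliques to conclude $V(G_2)$ is complete, you derive this directly via an induced $C_4$ on two vertices of $G_2$ and two components of $G_1$ — an equally valid (and arguably more self-contained) step, and your write-up of the converse is actually more explicit than the paper's.
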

\begin{proof}
The vertex connectivity of a non-complete connected chordal graph is given
by the cardinality of its minimum separator.

Graph $G$ is connected, so, in order to obtain the disconnected graph $G_1$ stated by Theorem \ref{theo:kirkland},
 we  must remove at least a minimal separator $S$ of $G$.
  And more, as $G$ results from a join operation between $G_1$ and $G_2$,
 the vertices of $S$ must be adjacent to all vertices of $V(G_1)$.
 A minimal separator of  chordal graph $G$ is a minimal vertex separator of $G$.
As a minimal vertex separator is a clique, all the vertices of $V(G_2)$  are pairwise adjacent.
 Hence, they are universal vertices and $S$ is a minimum graph separator.
 
 As $G_2$ is a clique, $a(G_2) = |V(G_2)| = \kappa(G)$. 
 Then $\kappa(G) \geq 2 \kappa(G) - n = n \geq \kappa(G) $ is always true.
  \qed
\end{proof}

It is interesting to analyse the structure of a graph that satisfies Theorem \ref{theo:chordal}.
It is not difficult to see that the minimal vertex separators play an important role
in their characterization.
As the  minimal vertex separator $S=V(G_2)$ is composed by universal vertices, it must be a subset of
any other minimal vertex separator of the graph and, 
there is only one {\em mvs} with this property.

Figure \ref{fig:kappa} shows some examples. 
It is well known  that cographs obey this property.
Quasi-threshold graphs satisfy this property, since, as it will be reviewed in the next section,
they are cographs and chordal graphs. 
However there are chordal graphs that satisfy Theorem \ref{theo:chordal}  and are not cographs
as it is the case of the graph of Figure \ref{fig:kappa}(a).

\begin{figure}[h]
\begin{center}
\begin{tikzpicture}
  [scale=.27,auto=left]
\tikzstyle{every node}=[circle, draw, fill=white,
                         inner sep=0pt, minimum width=8pt]

\node (a) at (3,0){};
\node (b) at (2,4){};
\node (c) at (6,7) {};
\node (d) at (10,4){};
\node (e) at (9,0){};
\node (a1) at (1,-3){};
\node (b1) at (11,-3){};
\node (c1) at (6,-6){};
\node (mudo) at (6,-8) [draw=none,fill=none] {(a)};
\node (x) at (19,-3){};
\node (y) at (16,0){};
\node (z) at (22,0){};
\node (t) at (19,3){};
\node (s) at (22,6){};
\node (s1) at (25,3){};
\node (s2) at (25,-3){};
\node (mudo) at (21,-8) [draw=none,fill=none] {(b)};
\node (m) at (30,2){};
\node (n) at (33,5){};
\node (o) at (36,2){};
\node (p) at (33,-1){};
\node (q) at (31,-4){};
\node (q1) at (35,-4){};
\node (q2) at (37,-4){};
\node (q3) at (39,-4){};
\node (mudo) at (32,-8) [draw=none,fill=none] {(c)};

\foreach \from/\to in 
{a/b, a/c,a/d, a/e,b/d,b/c, b/e,c/d, c/e, d/e,
a1/a,a1/b,a1/e,
b1/a,b1/e,b1/d,
c1/a,c1/e,
x/y,x/z,x/t,y/z,y/t,z/t,t/s,s/z,s1/s,s1/z,
s1/s,s2/z,s1/t,
m/n,m/o,m/p,n/o,n/p,o/p,
q/m,q/p,
q1/p,q2/p,q3/p}
\draw (\from) -- (\to);
\end{tikzpicture}
\caption{Graphs with $a(G) = \kappa(G)$}
\label{fig:kappa}
\end{center}
\end{figure}
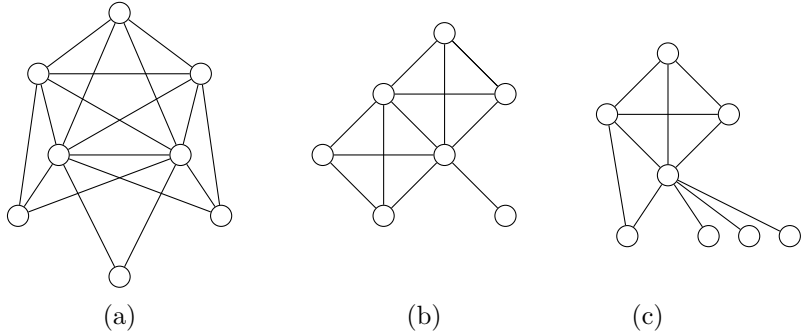

There is an efficient procedure to recognize if a graph $G$ obeys Theorem \ref{theo:chordal}.
First, the set ${\mathbb S}= \{ S_1, \ldots , S_r\}$ of
 minimal vertex separators must be determined; this step has $O(m)$ complexity time \cite{MP10}.
Then we must determine $\cap_{i=1}^r S_i$.
As $\sum_{i=1,r} |S_i| \leq m$ this step has also $O(m)$ complexity time.
It is immediate that $G$ has $\kappa(G)=a(G)$ if and only if this intersection
is for itself a minimal vertex separator  composed by universal vertices.



\section{Quasi-threshold graphs}

The {\em quasi-threshold graphs}  are the object of our study in this section.
An important subclass of chordal graphs, they
were defined in 1962 by Wolk \cite{W62} as {\em comparability graphs of a tree};
Golumbic \cite{G78} called them {\em trivially perfect graphs}.
Ma {\em et. al.} \cite{MWW89} called them {\em quasi-threshold graphs}
and studied algorithmic results.
Yan {\em et al.} \cite{YCC96} presented the following characterization theorem:

\begin{theorem}\label{theo:charact-quasi}
The following statements are equivalent for any graph $G$.
\begin{enumerate}
\item $G$ is a quasi-threshold graph.
\item $G$ is a cograph and is an interval graph.
\item $G$ is a cograph and is a chordal graph.
\item $G$ is $P_4$-free and $C_4$-free.
\item For any edge $uv$ in $G$, either $N[u] \subseteq N[v]$ or $N[v] \subseteq N[u]$.
\item If $v_1,v_2, \ldots, v_n$ is a path with $d(v_1) \geq d(v_2) \geq \ldots \geq d(v_{n-1})$, then
$\{v_1,v_2, \ldots, v_n\}$  is a clique.
\item G is induced by a rooted forest.
\end{enumerate}
\end{theorem}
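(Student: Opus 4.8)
The plan is to establish the seven equivalences of Theorem~\ref{theo:charact-quasi} by proving a cycle of implications rather than $\binom{7}{2}$ separate biconditionals, choosing the order so that each arrow is as short as possible and the structural characterizations (4), (5), (6), (7) feed naturally into one another. A convenient route is $(1)\Rightarrow(2)\Rightarrow(3)\Rightarrow(4)\Rightarrow(5)\Rightarrow(6)\Rightarrow(7)\Rightarrow(1)$, together with a short direct argument for whichever of (2) and (3) does not fall out of the chain cleanly. Since the paper has already recorded that cographs are exactly the $P_4$-free graphs and that chordal graphs are exactly those with no induced cycle of length $\ge 4$, several of these arrows reduce to recognizing forbidden induced subgraphs: for instance $(3)\Rightarrow(4)$ is essentially immediate, because a cograph is $P_4$-free by definition and a chordal graph cannot contain an induced $C_4$.

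First I would treat $(1)\Leftrightarrow(7)$ and $(1)\Rightarrow(2)$ using Wolk's original description of quasi-threshold graphs as comparability graphs of trees (equivalently rooted forests), since statement (7) is just a restatement of the defining construction, and the comparability-graph viewpoint shows transitive orientability, which combined with chordality gives the interval-graph property in (2). Then the heart of the argument is the structural loop $(4)\Rightarrow(5)\Rightarrow(6)\Rightarrow(7)$. For $(4)\Rightarrow(5)$ I would argue by contradiction: if $uv\in E$ but neither closed neighborhood contains the other, pick $x\in N[u]\setminus N[v]$ and $y\in N[v]\setminus N[u]$; then analyze the adjacency between $x$ and $y$ to exhibit either an induced $P_4$ or an induced $C_4$ on $\{x,u,v,y\}$. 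For $(5)\Rightarrow(6)$, given a path $v_1,\dots,v_n$ with nonincreasing degrees along $v_1,\dots,v_{n-1}$, I would induct and use the nesting of closed neighborhoods along consecutive edges to force every $v_i$ into a single clique — the degree condition is what breaks ties and guarantees the nesting goes the ``right'' way. For $(6)\Rightarrow(7)$ one builds the rooted forest by repeatedly extracting a vertex of maximum degree (necessarily a vertex whose closed neighborhood contains all others in its component, hence a legitimate root) and recursing on the components that remain.

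To finish, I would close the cycle with $(7)\Rightarrow(1)$ (the comparability graph of the associated partial order is precisely $G$, so $G$ is a quasi-threshold graph by definition) and patch in $(2)\Rightarrow(3)$, which is trivial since interval graphs are chordal, and the reverse patch $(3)\Rightarrow(2)$ is not needed if the cycle is arranged as above. The main obstacle I anticipate is $(5)\Rightarrow(6)$: the hypothesis in (6) only constrains the degrees of $v_1,\dots,v_{n-1}$, not $v_n$, and a path is not assumed induced, so one has to be careful about which edges among the $v_i$ are already present and argue that the local neighborhood-nesting from (5), applied edge by edge, globally forces cliquehood; a naive induction can fail if one does not track \emph{which} endpoint of each edge dominates. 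A secondary subtlety is making sure the forbidden-subgraph arguments in $(4)\Rightarrow(5)$ correctly distinguish the $P_4$ case from the $C_4$ case depending on whether $x\sim y$, so that statement (4) alone (which forbids both) does the job.
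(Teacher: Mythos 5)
This theorem is not proved in the paper at all: it is quoted verbatim from Yan, Chen and Chang \cite{YCC96} as a known characterization, so there is no in-paper argument to compare yours against. Judged on its own, your cyclic scheme $(1)\Rightarrow(2)\Rightarrow(3)\Rightarrow(4)\Rightarrow(5)\Rightarrow(6)\Rightarrow(7)\Rightarrow(1)$ is the standard and workable route, and most of the individual arrows are sound as sketched: $(3)\Rightarrow(4)$ and $(2)\Rightarrow(3)$ are immediate; your case analysis for $(4)\Rightarrow(5)$ (induced $P_4$ if $x\not\sim y$, induced $C_4$ on $x,u,v,y$ if $x\sim y$) is exactly right; and $(5)\Rightarrow(6)$ is actually cleaner than you fear, since $d(v_i)\geq d(v_{i+1})$ forces the containment from (5) to be $N[v_{i+1}]\subseteq N[v_i]$ for $i\leq n-2$, whence $N[v_{n-1}]\subseteq\cdots\subseteq N[v_1]$ and every $v_j$ (including $v_n\in N[v_{n-1}]$) lies in $N[v_i]$ for all $i<j$, so the vertex set is a clique with no induction needed.

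Two steps do need repair. First, in $(1)\Rightarrow(2)$ you argue that transitive orientability of $G$ plus chordality yields the interval property; that is not the Gilmore--Hoffman theorem, which requires the \emph{complement} $\overline{G}$ to be a comparability graph. The fix is easy (a $P_4$-free graph is a cograph, cographs are closed under complementation and are comparability graphs, so $\overline{G}$ is transitively orientable and chordality of $G$ then gives interval), but as written the implication is unsupported. Second, in $(6)\Rightarrow(7)$ your recursion extracts a maximum-degree vertex and recurses on what remains, but condition (6) is stated in terms of degrees in $G$, which change after deletion, so (6) is not obviously inherited by the residual graph; you should either first show that (6) implies (4) (which \emph{is} hereditary) and run the ``every connected induced subgraph has a universal vertex'' argument from there, or verify directly that a maximum-degree vertex of a component is universal in that component (via the distance-two argument on a path $v_1,w,u$) and that the hypothesis still applies to the components of the deletion. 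With those two patches the cycle closes and the proof is complete.
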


It is immediate that, as a subclass of cographs, they are integral Laplacian graphs.

\subsection{Known subclasses}

It is interesting to review some subclasses of quasi-thresholds graphs already studied;
the results that will be presented for the quasi-threshold graphs will be obviously valid for all these classes.

\begin{itemize}
\item the {\em windmill graph} $Wd(k,\ell)$, $k \geq 2$ and $\ell \geq 2$, is a graph constructed   
by joining $\ell$ copies of a complete graph $K_k$ at a shared universal vertex. 
Figure \ref{fig:core}(a) presents $Wd(4,3)$.

\item the {\em split-complete graph}, which is the join of a  $K_k$ and a set of $n-k$ independent vertices.  
Figure \ref{fig:core}(b) presents an example with $7$  vertices.
\end{itemize}

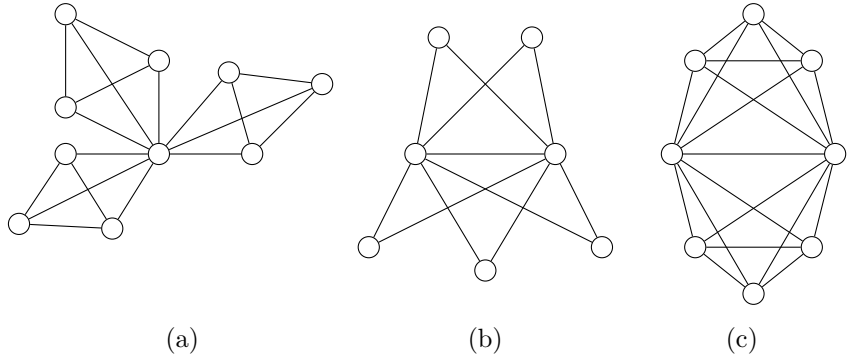
\begin{figure}[h]
\begin{center}
\begin{tikzpicture}
  [scale=.31,auto=left]
\tikzstyle{every node}=[circle, draw, fill=white,
                         inner sep=0pt, minimum width=8pt]

\node (x) at (-13,0){};
\node (x1) at (-15,-3){};
\node (x2) at (-11,-3.2){};
\node (t) at (-9,4){};
\node (t1) at (-13,2){};
\node (t2) at (-13,6){};
\node (s) at (-9,0){};
\node (w) at (-5,0){};
\node (w1) at (-6,3.5){};
\node (w2) at (-2,3){};
\node (mudo) at (-8,-8) [draw=none,fill=none] {(a)};
\node (a) at (2,0){};
\node (b) at (3,5){};
\node (c) at (7,5){};
\node (d) at (8,0){};
\node (a1) at (0,-4){};
\node (b1) at (10,-4){};
\node (c1) at (5,-5){};
\node (mudo) at (5,-8) [draw=none,fill=none] {(b)};
\node (o) at (13,0){}; 
\node (p) at (14,4){};
\node (p2) at (16.5,6){};
\node (q) at (19,4){};
\node (r) at (20,0){}; 
\node (p1) at (14,-4){};
\node (q1) at (19,-4){};
\node (p3) at (16.5,-6){};
\node (mudo) at (16,-8) [draw=none,fill=none] {(c)};

\foreach \from/\to in{x/x1,x/x2,x1/x2,x/s,x1/s,x2/s,
t/t1,t/t2,t1/t2,t/s,t1/s,t2/s,
w/w1,w/w2,w1/w2,w/s,w1/s,w2/s,
a/b,a/d,b/d,a/c,c/d,
a1/a,a1/d,b1/a,b1/d,c1/a,c1/d,
o/p,o/p2,o/q,o/r, p/p2,p/q,p/r, p2/q,p2/r, q/r,
o/p1, o/q1, o/p3, r/p1,r/q1,r/p3, p1/q1, p1/p3, q1/p3}
\draw (\from) -- (\to);


\end{tikzpicture}
\end{center}
\caption{Examples} 
\label{fig:core}
\end{figure}

Recently, a class that generalises windmill  and the split-complete graphs, 
the {\em core-satellite} graphs \cite{EB17},  were defined:

Let $c \geq 1$, $s \geq 1$ and $\eta \geq 2$. 
The {\em core-satellite graph}  is $\Theta (c,s,\eta) \cong K_c \nabla (\eta K_s)$. 
That is, they are the graphs consisting of $\eta $ copies of  $K_s$ (the satellites) 
meeting in a  $K_c$ (the core).
Figure \ref{fig:core}(c) presents a $\Theta (2,3,2)$ graph.
In the same paper, the {\em generalized core-satellite graphs} were defined, 
relaxing the size of the satellites in the previous definition;
this class is equivalent to a chordal graph that has only one minimal vertex separator and, as so,
it is a quasi-threshold graph.
Windmills have all maximal cliques with the same cardinality and 
the minimal vertex separator of cardinality 1;
split-complete graphs have all maximal cliques with cardinality $k+1$ and 
the minimal vertex separator of cardinality $k$; core-satellite graphs
have all maximal cliques with the same cardinality.

Another well-known  subclass is composed by the threshold graphs:
a graph is a {\em threshold graph} if it can constructed from the empty graph 
by repeatedly adding either an isolated vertex or an universal vertex. 
This construction can be expressed by a sequence of $0$'s  and $1$'s,
$0$ representing the addition of an isolated vertex, $1$ representing the addition of an universal vertex.

Threshold graphs are known to be the intersection of split  graphs  and cographs \cite{BLS99}.
In a connected threshold graph, the set of minimal vertex separators forms a chain (nested {\em mvs});
the {\em mvs} with smaller cardinality gives $\kappa(G)$ and, as this {\em mvs} is composed 
by universal vertices, it satisfies Theorem \ref{theo:chordal}.

\subsection{Structural $\times$ spectral properties}

{\em  Quasi-threshold graphs} are defined recursively by the following operations: 
\begin{itemize}
\item[$(\rho_1)$]  an isolated vertex is a quasi-threshold graph; 
\item[$(\rho_2)$] adding a new vertex adjacent to all vertices of a quasi-threshold graph results in a quasi-threshold graph;
\item[$(\rho_3)$] the disjoint union of two quasi-threshold graphs results in a quasi-threshold graph.
\end{itemize}

Yan {et al.} \cite{YCC96} presented a recognition algorithm for the class of quasi-threshold graphs
that, at the same time, build the rooted forest (or the tree, if the graph is connected) that induces the graph 
(item 7 of Theorem \ref{theo:charact-quasi}).
This directed tree is called a {\em tree representation} of $G$.
This algorithm is important in our quest for the integer Laplacian eigenvalues of the graph.
Given the graph $G=(V,E)$ the algorithm consists of two steps:
In Step 1 the directed graph $G_d =(V, E_d)$ is built.
An edge $\{v,w\}$ of $E$ becomes an arc $vw$ of $E_d$ if $d_G(v) \geq d_G(w)$.
In Step 2 the directed tree $T$ that induces $G$ is built.

Some structural properties of quasi-threshold graphs can be stated.

\begin{lemma}\label{lem:prop-quasi}
Let $G$ be a quasi-threshold graph.
The following properties hold:
\begin{itemize}
\item[(a)] Every maximal clique of $G$ has at least one simplicial vertex.
\item[(b)] Every maximal clique of $G$ contains at most one chain of minimal vertex separators.
\end{itemize}
\end{lemma}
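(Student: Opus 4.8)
The plan is to exploit the recursive construction $(\rho_1)$–$(\rho_3)$ together with the tree-representation of a connected quasi-threshold graph. For part (a), I would argue by induction on $|V(G)|$. The base case is an isolated vertex, which is itself a maximal clique with a (trivially simplicial) vertex. For the inductive step, a quasi-threshold graph $G$ on more than one vertex arises either by $(\rho_2)$ — adding a universal vertex $u$ to a quasi-threshold graph $H$ — or by $(\rho_3)$ — a disjoint union $H_1 \cup H_2$. In the $(\rho_3)$ case the maximal cliques of $G$ are exactly those of $H_1$ and of $H_2$, so the claim follows immediately from the inductive hypothesis applied to each component. In the $(\rho_2)$ case each maximal clique of $G$ is $Q \cup \{u\}$ for some maximal clique $Q$ of $H$ (and, if $H$ is empty, $\{u\}$ itself); by induction $Q$ contains a vertex $v$ simplicial in $H$, and since $u$ is adjacent to everything, $N_G(v) = N_H(v) \cup \{u\}$ is still a clique, so $v$ remains simplicial in $G$. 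I would then observe that this also gives the base of the structural picture we need: a simplicial vertex always sits in a unique maximal clique.

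For part (b), the cleanest route is via the clique-tree characterization of minimal vertex separators recalled in the Basic Concepts section: for a clique-tree $T = (\mathbb{Q}, E_T)$, the minimal vertex separators of $G$ are exactly the sets $Q \cap Q'$ over edges $\{Q,Q'\} \in E_T$, and this multiset is independent of the chosen clique-tree. The key claim to establish is that in a quasi-threshold graph the clique-tree can be taken to be a \emph{path} (indeed this falls out of the rooted-tree/forest representation in item 7 of Theorem~\ref{theo:charact-quasi}: walking down a root-to-leaf chain of the representing tree produces a totally ordered chain of cliques whose successive intersections are nested). Once the clique-tree of each connected component is a path $Q_1 - Q_2 - \cdots - Q_\ell$, each maximal clique $Q_i$ is incident to at most two clique-tree edges, giving separators $Q_{i-1}\cap Q_i$ and $Q_i \cap Q_{i+1}$; and the path structure, together with property (5) of Theorem~\ref{theo:charact-quasi} (for any edge $uv$, $N[u]\subseteq N[v]$ or $N[v]\subseteq N[u]$), forces these two separators to be comparable — one contained in the other — so that the minimal vertex separators meeting a fixed maximal clique form a single chain in the sense defined in the excerpt.

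The main obstacle is part (b): proving that the clique-tree may be chosen to be a path, or equivalently that the separators incident to one maximal clique are nested. The inductive $(\rho_2)$/$(\rho_3)$ analysis handles this too — under $(\rho_3)$ the components, hence their clique-paths, are disjoint; under $(\rho_2)$ adding a universal vertex $u$ enlarges every maximal clique by $u$ and every nonempty separator by $u$, which preserves both the path shape of the clique-tree and the nestedness of separators at each clique — but one must be careful about the degenerate cases where $H$ is empty or disconnected (so $G$ becomes a "fan" of components hung off $u$, whose clique-tree is a star rather than a path). In that situation the center clique can touch many separators; the point to check is that they are all equal to $\{u\}$ (or all contain the common core), so they still form a chain. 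I would isolate this as the crux of the argument and verify it directly from the construction, then assemble parts (a) and (b) by the induction just sketched.
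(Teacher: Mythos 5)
Your proposal is correct and follows essentially the same route as the paper: induction on the recursive construction $(\rho_1)$--$(\rho_3)$, with the universal-vertex operation $(\rho_2)$ carrying the weight for both (a) (an old simplicial vertex stays simplicial after a universal vertex is added) and (b) (every existing separator gains the new vertex, so nestedness within each maximal clique is preserved). One caveat: your intermediate claim that the clique-tree of a quasi-threshold graph can be taken to be a path is false in general (e.g., the windmill $Wd(k,\ell)$ with $\ell \geq 3$ forces a star), but this does not damage the argument, since the fallback you yourself identify --- that in the disconnected case the new separator $\{u\}$ is contained in all the others, so the separators inside each maximal clique still form a single chain --- is precisely the observation the paper's proof rests on.
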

\begin{proof}  The proof is by construction.

Operation $(\rho_3)$ joins two quasi-threshold graphs $G_1$ and  $G_2$, resulting
a new quasi-threshold graph $G$, not connected.
The operation does not affect the maximal cliques, the minimal vertex separators or
the simplicial vertices.

Operation $(\rho_1)$ establishes a new maximal clique composed by one simplicial vertex.
It is an isolated vertex and it is also a quasi-threshold graph with one simplicial vertex.

Operation $(\rho_2)$ adds a universal vertex $v$.
Three cases must be analysed:

\begin{itemize}
\item graph $G$ has only one maximal clique $Q$:
vertex $v$ is added to $Q$ and it becomes a simplicial vertex.

\item graph $G$ is connected and it has more than one maximal clique:
vertex $v$ is added to all maximal cliques. 
The simplicial vertices remain the same and the minimal vertex separators have 
an increment of one vertex (vertex $v$).

\item graph $G$ is not connected:
suppose that its components  are $C_1$, $C_2,\ldots ,$ $C_k$;
they are quasi-threshold graphs.
In this case, vertex $v$ establishes a new minimal vertex separator.
Vertex $v$ is added to all maximal cliques.
It belongs to all disjoint graphs and, as so, it stablishes a new $mvs$ that separates
vertices belonging to different components.
\end{itemize}

In all cases, a new maximal clique begins as a  simplicial vertex
that remains as such.
So, all maximal cliques have at least one simplicial vertex (item (a)).
All vertices added by operation $(\rho_2)$ are added to all minimal vertex separators
already existent. 
So, the new minimal vertex separator is contained in all the old ones
and  every maximal clique of $G$ contains at most one chain of minimal vertex separators (item (b)).
\qed
\end{proof}

\begin{lemma}\label{lem:prop-tree}
Let $G$ be a connected quasi-threshold graph and let $T$ be a tree representation of $G$
with root $r$.
Let $P= \langle r= v_1, \ldots, v_p\rangle$ be a maximal path in $T$.
Then
\begin{enumerate}
\item the vertices of $P$ establish a maximal clique $Q$ of $G$;
\item $\exists k, k \geq 2,$ such that $v_k, v_{k+1}, \ldots, v_p$ are simplicial vertices of $Q$.
\end{enumerate}
\end{lemma}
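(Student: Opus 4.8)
The plan is to argue directly from the structure of the tree representation $T$ and the recognition algorithm of Yan \emph{et al.} \cite{YCC96}, combined with the characterization in Theorem~\ref{theo:charact-quasi}, in particular item~7 (that $G$ is induced by the rooted forest $T$) and item~5 (the closed-neighbourhood nesting along edges).

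First I would recall precisely what ``$G$ is induced by the rooted tree $T$'' means: two vertices are adjacent in $G$ if and only if one is an ancestor of the other in $T$. Granting this, a path $P = \langle r = v_1, \ldots, v_p \rangle$ from the root consists of pairwise comparable vertices (each $v_i$ is an ancestor of every $v_j$ with $j > i$), so the vertices of $P$ form a clique $Q'$ of $G$. For part~1 I must upgrade ``clique'' to ``maximal clique.'' Suppose some vertex $w \notin P$ is adjacent in $G$ to every vertex of $P$; by the ancestor characterization $w$ is comparable to each $v_i$. Since $P$ is a \emph{maximal} path, $v_p$ is a leaf, so $w$ cannot be a descendant of $v_p$; hence $w$ is an ancestor of some $v_i$ and therefore lies on the root-to-$v_p$ path, i.e. $w \in P$, a contradiction. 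So $Q = \{v_1, \ldots, v_p\}$ is a maximal clique.

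For part~2 I would identify the simplicial vertices of $Q$ as exactly those $v_i$ having no children in $T$ other than $v_{i+1}$ (and $v_p$, being a leaf, is always simplicial). The key point is that $N_G(v_i)$ equals the set of proper ancestors of $v_i$ together with all descendants of $v_i$; this is a clique precisely when $v_i$ has at most one child, because two distinct children of $v_i$ would be incomparable and hence non-adjacent in $G$, giving two non-adjacent vertices in $N_G(v_i)$. Now let $k$ be the smallest index $\ge 2$ such that $v_k, v_{k+1}, \ldots, v_p$ all have exactly one child except $v_p$; equivalently, $v_{k-1}$ is the last vertex on $P$ that branches (has a child off $P$), and $k = 2$ if no $v_i$ with $i \ge 2$ branches. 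Then for $j \ge k$ the only child of $v_j$ is $v_{j+1}$ (or none), so $N_G(v_j)$ is a clique and $v_j$ is simplicial in $G$, hence in $Q$. Since $v_p$ is a leaf this is non-empty, and one checks $k \ge 2$ by construction; if desired, $k=2$ exactly when $r=v_1$ is the only branching vertex of $P$.

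The main obstacle I anticipate is making the neighbourhood description ``$N_G(v_i) = \text{(proper ancestors)} \cup \text{(descendants)}$'' fully rigorous from the paper's stated facts, since the excerpt invokes item~7 of Theorem~\ref{theo:charact-quasi} only informally (``$G$ is induced by a rooted forest'') and the tree $T$ is produced by a specific two-step algorithm. I would spend a short paragraph pinning down that the arcs of $G_d$, after the transitive-reduction-like Step~2, yield a forest whose ancestor relation is exactly the comparability relation generating $G$; the closed-neighbourhood nesting (item~5) is the clean algebraic shadow of this and can be used as a substitute if one prefers to avoid the algorithm's details. Once that equivalence is in hand, parts~1 and~2 are the short comparability arguments above, and the existence of the branching index $k$ is immediate.
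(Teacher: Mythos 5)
Your proof is correct, but it follows a genuinely different route from the paper's. The paper works with the degree data encoded in the tree representation: since every arc $uv$ of $G_d$ satisfies $d_G(u)\geq d_G(v)$, a maximal root-to-leaf path is a sequence of non-increasing degrees, so item~6 of Theorem~\ref{theo:charact-quasi} immediately makes it a clique (maximality is then asserted from maximality of the path); for part~2 the paper observes that simplicial vertices attain the minimum possible degree $|Q|-1$ within their maximal clique, hence must occupy the tail of the degree-sorted path, and invokes Lemma~\ref{lem:prop-quasi} to identify the head of the path with the chain of minimal vertex separators. You instead work with item~7, reading $G$ as the comparability graph of $T$ (adjacency $=$ ancestor relation), which buys you two things the paper leaves implicit: an actual argument that the clique is \emph{maximal} (any common neighbour of $P$ is comparable to the leaf $v_p$, hence an ancestor of it, hence already on $P$), and a concrete structural description of which $v_i$ are simplicial. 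Both approaches are legitimate; yours is the more self-contained, while the paper's connects the statement to the separator-chain structure it uses later.

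One local imprecision worth fixing: your sentence claiming that $N_G(v_i)$ is a clique \emph{precisely} when $v_i$ has at most one child is false as a characterization --- $v_i$ may have a unique child whose own subtree branches, in which case two incomparable descendants of $v_i$ lie in $N_G(v_i)$. The correct condition is that the entire subtree rooted at $v_i$ is a path. Your subsequent construction of $k$ (taking all of $v_k,\dots,v_{p-1}$ to have exactly one child, i.e.\ placing $k$ after the last branching vertex of $P$) does force each such subtree to be the path $v_j,\dots,v_p$, so the existence claim of part~2 --- which is all the lemma asserts --- goes through unharmed; just do not present the one-child condition as an equivalence.
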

\begin{proof}  
An edge $\{u,v\}$ of $G$ is an arc $uv$ of $G_d$ if
$d_G(u) \geq d_G(v)$.
So, by Theorem \ref{theo:charact-quasi},
 it is immediate that the vertices of $P= \langle r= v_1, \ldots, v_p\rangle$
establish a clique in $G$. 
As it is a maximal path, it is  a maximal clique.
In a maximal clique, the simplicial vertices have the smaller degree, since
they are linked with only the vertices of that maximal clique.
So, all simplicial vertices apppear in the end of the path.
By Lemma \ref{lem:prop-quasi} the non-simplicial vertices compose
one chain of minimal vertex separators; they form the first part of the path.

Observe that all leaves in $T$ are simplicial vertices, since all
maximal cliques have at least one simplicial vertex (Lemma \ref{lem:prop-quasi}).
\qed
\end{proof}

Bapat {\em et al.} \cite{BLP08} have presented the integer Laplacian eigenvalues of the class 
of weakly quasi-theshold, which generalises the quasi-threshold graphs.
Their result provides the integer Laplacian eigenvalues of a quasi-threshold graph in terms  of 
the directed tree $T$:

\begin{theorem}{\em (Corollary 2.2 \cite{BLP08})}\label{theo:quasi-Bapat}
Let $G$ be a connected quasi-threshold graph. 
Suppose $G$ is the comparability graph of a rooted tree $T=(V,E_T)$ with vertices $u_1, u_2,\ldots, u_k$. 
Then the nonzero  Laplacian eigenvalues of $G$ are
\begin{itemize}
\item[\em(1)] $d_G(u_i) + 1$, repeated exactly once for each non-pendant vertex $u_i$, and
\item[\em(2)]  $m_i + 1$, repeated $c_i - 1$ times for each non-pendant vertex $u_i$,

being  $m_i$, $2 \leq i \leq k$,  the distance of the vertex $u_i$ from the root vertex $u_1$
and being $c_i=|child(u_i)|$ where $child(u_i)$ is the set of vertices $v$ such that there is an arc $u_i v \in E_T$.

\end{itemize}
\end{theorem}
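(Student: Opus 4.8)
The natural route is to transfer the spectral information from the more general theorem of Bapat \emph{et al.} on weakly quasi-threshold graphs by matching up the combinatorial data of a quasi-threshold graph $G$ with the rooted-tree data $T$. So the plan is: first, observe that by item (7) of Theorem \ref{theo:charact-quasi} (and the tree-representation algorithm of Yan \emph{et al.}) a connected quasi-threshold graph is precisely the comparability graph of a rooted tree, hence the hypothesis of Theorem \ref{theo:quasi-Bapat} is automatically met and the two parameter families $d_G(u_i)+1$ and $m_i+1$ are well defined. The substantive work is then to verify that the cited result of Bapat \emph{et al.} applied to this $T$ yields exactly the list in (1)–(2); but since Theorem \ref{theo:quasi-Bapat} is itself quoted verbatim as Corollary 2.2 of \cite{BLP08}, there is really nothing left to prove at the level of eigenvalues — the statement \emph{is} the specialization.

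If instead the intended content is a self-contained derivation, I would argue as follows. Let $T$ be a rooted tree on vertices $u_1=r,u_2,\dots,u_k$ and let $G$ be its comparability graph, so $u_iu_j\in E(G)$ iff one of $u_i,u_j$ is an ancestor of the other. First I would record that the closed neighbourhood of $u_i$ in $G$ is the union of the ancestor chain above $u_i$ and the subtree rooted at $u_i$, so $d_G(u_i)+1 = m_i + |T_{u_i}|$, where $m_i$ is the depth of $u_i$ and $|T_{u_i}|$ the size of its subtree. Then I would exploit the recursive construction $(\rho_1)$–$(\rho_3)$ together with the two standard facts about Laplacian spectra under graph operations: disjoint union concatenates spectra (with the zero eigenvalues accumulating), and $\mathrm{Spec}\,L(H\nabla K_1)$ is obtained from $\mathrm{Spec}\,L(H)$ by adding $1$ to every nonzero eigenvalue, turning the single $0$ of $H$ into $0$, and appending one new eigenvalue equal to $|V(H)|+1$. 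Since adding a universal vertex to $G$ corresponds to adding the old root as a child of a new root in $T$ (which increments every depth $m_i$ by $1$ and every $d_G(u_i)$ by $1$, and creates a new non-pendant vertex whose degree-plus-one is $|V(G)|+1$), these two operations propagate the claimed formula through the recursion; the base case $(\rho_1)$ is the one-vertex graph with empty nonzero spectrum, matching a $T$ with no non-pendant vertex.

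The genuinely delicate point in the self-contained version is the multiplicity bookkeeping for family (2): when a disconnected quasi-threshold graph with components $C_1,\dots,C_c$ is capped by a universal vertex, the $c$ roots of the component-trees become the $c$ children of one new non-pendant vertex $u$, and Theorem \ref{theo:quasi-Bapat} assigns this vertex an eigenvalue $m_u+1$ of multiplicity $c_u-1=c-1$. I would need to check that the join operation indeed produces $c-1$ new copies of the eigenvalue $m_u+1$ — this comes from the fact that the join $C_1\nabla\cdots$ glued at a single apex vertex contributes, beyond the incremented spectra of the pieces, a block whose eigenvalues are governed by the quotient matrix on the $c$ ``apex-adjacent'' equitable classes, and one verifies that this block has the value $|$apex$|{+}1 = m_u{+}1$ with the right multiplicity. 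I expect this multiplicity accounting — reconciling the eigenvalue $0$'s that merge under union with the new eigenvalues created under join — to be the main obstacle; everything else is a straightforward induction on the $(\rho_1)$–$(\rho_3)$ construction using Lemmas \ref{lem:prop-quasi} and \ref{lem:prop-tree} to identify non-pendant vertices of $T$ with chains of minimal vertex separators and pendant vertices with simplicial vertices.
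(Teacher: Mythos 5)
The paper gives no proof of this statement at all --- it is imported verbatim as Corollary 2.2 of \cite{BLP08} --- so your opening observation that the theorem ``is the specialization'' and there is nothing left to prove is exactly the paper's own treatment, and your proposal is correct on that count. Your supplementary self-contained sketch (induction on the $(\rho_1)$--$(\rho_3)$ construction using the union and join spectral rules) is sound in outline, and the vertex count checks out: $\sum_{u_i\ \mathrm{non\text{-}pendant}} \bigl(1 + (c_i-1)\bigr) = k-1$ nonzero eigenvalues, as required for a connected graph. The one point worth tightening is that the multiplicity bookkeeping you flag as ``the main obstacle'' is handled automatically by the correct form of the join rule: for \emph{any} graph $H$ on $n$ vertices, connected or not, the Laplacian eigenvalues of $H\nabla K_1$ are $0$, $n+1$, and $\mu_i(H)+1$ for $i=1,\ldots,n-1$ (this follows from the complement formula applied to $\overline{\overline{H}+K_1}$). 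When $H$ has $c$ components, the multiset $\{\mu_i(H)\}_{i=1}^{n-1}$ contains $c-1$ surplus zeros, which become $c-1$ ones --- precisely the eigenvalue $m_u+1=1$ with multiplicity $c_u-1=c-1$ for the newly created apex --- so no equitable-partition or quotient-matrix computation is needed. Your phrasing ``adding $1$ to every nonzero eigenvalue, turning the single $0$ of $H$ into $0$'' is the only misstatement, since it silently assumes $H$ connected, but your third paragraph shows you are aware of the issue and the fix is the one just described.
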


The same result can be expressed by structural aspects of chordal graphs.

\begin{theorem}\label{theo:quasi-structural}
 Let $G$ be a  connected quasi-threshold graph. 
Let $\mathbb Q$ be its set of maximal cliques, $\mathbb S$ its set of
minimal vertex separators and $Simp(Q_i)$ the set of simplicial vertices of $Q_i, \forall Q_i\in \mathbb Q$. 
  Then the nonzero Laplacian eigenvalues of $G$ are
  \begin{itemize}
\item[\em(a)]  $d_G(v) + 1$, repeated exactly once for each non-simplicial vertex $v$, and
\item[\em(b)] $|Q_i|$, repeated $Simp(Q_i)-1$ times, $\forall Q_i \in \mathbb Q$, and
\item[\em(c)] $|S_i|$ repeated  $\mu(S_i)$ times.
\end{itemize}
\end{theorem}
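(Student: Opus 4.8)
The strategy is to translate Theorem~\ref{theo:quasi-Bapat} into the language of maximal cliques, minimal vertex separators and simplicial vertices, using the correspondence between the rooted tree $T$ and the clique structure of $G$ established in Lemma~\ref{lem:prop-tree}. First I would recall that a maximal path $P=\langle r=v_1,\dots,v_p\rangle$ in $T$ corresponds to a maximal clique $Q$ of $G$, and that the vertices of $Q$ split into a prefix of non-simplicial vertices (which form a chain of minimal vertex separators by Lemma~\ref{lem:prop-quasi}) and a suffix $v_k,\dots,v_p$ of simplicial vertices. The key observation is that every non-pendant vertex of $T$ is exactly a non-simplicial vertex of $G$ (a leaf of $T$ is simplicial by the last remark in Lemma~\ref{lem:prop-tree}, and conversely a simplicial vertex, lying in a unique maximal clique, sits at the end of the corresponding path and is therefore a leaf). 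This identification immediately gives part~(a): item~(1) of Theorem~\ref{theo:quasi-Bapat} contributes $d_G(v)+1$ once for each non-pendant vertex $u_i$ of $T$, i.e.\ once for each non-simplicial vertex $v$ of $G$.

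Next I would handle parts~(b) and~(c) together by re-bookkeeping the contribution of item~(2) of Theorem~\ref{theo:quasi-Bapat}, namely $m_i+1$ with multiplicity $c_i-1$ for each non-pendant $u_i$, where $m_i$ is the distance of $u_i$ from the root and $c_i$ is its number of children. For a non-pendant vertex $u_i$ at depth $m_i$, the path from the root down to $u_i$ has $m_i+1$ vertices and forms the ``top part'' of every maximal clique whose corresponding path passes through $u_i$; there are exactly $c_i$ such maximal cliques (one per child of $u_i$, since the subtrees below distinct children are disjoint). Thus the number $m_i+1$ should be read as $|S|$ where $S$ is the set consisting of $u_i$ and its ancestors — this is precisely the minimal vertex separator obtained by intersecting any two of those $c_i$ maximal cliques — and the multiplicity $c_i-1$ is exactly the multiplicity $\mu(S)$ of that minimal vertex separator in the multiset $\mathbb M$ (a clique-tree of $G$ is obtained from $T$ by taking the maximal paths as nodes; the edge separators of that clique-tree are the ancestor-chains, each with the correct multiplicity). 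So summing item~(2) over all non-pendant $u_i$ with $c_i\ge 2$ yields exactly the eigenvalues of part~(c).

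It remains to recover part~(b). The vertices $u_i$ with $c_i=1$ contribute nothing to item~(2) of Theorem~\ref{theo:quasi-Bapat}, but they do contribute an extra copy of their own ``clique-size'' eigenvalue in a way that the naive count misses; more precisely I would argue as follows. Fix a maximal clique $Q$ with corresponding path $\langle v_1,\dots,v_p\rangle$ and simplicial suffix $v_k,\dots,v_p$, so $|Simp(Q)|=p-k+1$. Among $v_k,\dots,v_{p-1}$ (all non-pendant internal vertices of $T$ lying on this path, but this needs care — actually $v_k,\dots,v_p$ are leaves unless they have further children), I would instead locate the vertex $u$ on the path at depth $k-1$: it is the last non-simplicial vertex, it has $|Simp(Q)|$ children along this branch... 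The cleanest route is: the chain of minimal vertex separators inside $Q$ has top element of size $k-1$ and the ``full clique'' has size $p$; the intermediate sizes $k, k+1,\dots,p-1,p$ each arise as $m_i+1$ for the vertex $v_i$ at depth $i-1$, which has $c_i=1$ child \emph{within this path} but the relevant multiplicity bump comes from counting. I expect the main obstacle to be exactly this: reconciling the ``$c_i-1$'' multiplicities in Theorem~\ref{theo:quasi-Bapat} (which are clique-tree edge multiplicities) with the clean separation into the three disjoint buckets (a), (b), (c), and in particular showing that the $|Q_i|$ eigenvalues with multiplicity $|Simp(Q_i)|-1$ account precisely for the contributions of the chain of depth-$(k)$-through-$(p-1)$ vertices along each branch, with no double counting across branches that share a common prefix. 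Once that accounting lemma is in place — essentially a statement that $\sum_i(c_i-1)=\,$(number of maximal cliques)$\,-1$ decomposed appropriately — parts (b) and (c) follow by partitioning the non-pendant vertices of $T$ according to whether they are the ``branch point'' of a minimal vertex separator (giving (c)) or an interior vertex of a unique downward branch (giving (b)), and matching the arithmetic $m_i+1 = |Q_i|$ or $m_i+1=|S_i|$ in each case. I would close by noting the total count: (a) gives one eigenvalue per non-simplicial vertex, (b) and (c) together give one per simplicial vertex minus one per connected ``root choice'', matching the $n-1$ nonzero Laplacian eigenvalues of the connected graph $G$.
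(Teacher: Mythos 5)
Your key identification --- ``every non-pendant vertex of $T$ is exactly a non-simplicial vertex of $G$'' --- is false, and this is where the argument breaks. In the tree representation, the simplicial vertices of a maximal clique $Q$ are pairwise adjacent in $G$, hence pairwise comparable in $T$, hence totally ordered: they form a chain $v_k, v_{k+1},\ldots, v_p$ at the bottom of the corresponding maximal path (this is exactly item 2 of Lemma~\ref{lem:prop-tree}), and only the last one, $v_p$, is a leaf of $T$. So whenever $|Simp(Q)|\geq 2$ there are $|Simp(Q)|-1$ simplicial vertices that are \emph{non-pendant} in $T$; already for $G=K_n$ all $n$ vertices are simplicial yet $T$ is a path with $n-1$ non-pendant vertices. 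Consequently item (1) of Theorem~\ref{theo:quasi-Bapat} is not exhausted by part (a): the non-pendant simplicial vertices also contribute $d_G(v)+1$, and since such a vertex satisfies $d_G(v)=|Q_i|-1$, these contributions are precisely the eigenvalues $|Q_i|$ with multiplicity $|Simp(Q_i)|-1$ of part (b). This is how the paper obtains (b). Your alternative plan of extracting (b) from item (2) via the vertices with $c_i=1$ cannot work --- those vertices contribute $m_i+1$ with multiplicity $c_i-1=0$, i.e.\ nothing --- and you sense this obstruction in your third paragraph but do not resolve it; as written, part (b) is not proved.

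The rest is sound and matches the paper: part (a) restricted to the genuinely non-simplicial non-pendant vertices is correct, and your reading of item (2) --- $m_i+1=|S|$ for the ancestor-chain $S$ of a branch vertex $u_i$, with $c_i-1=\mu(S)$ --- is exactly the paper's derivation of part (c). The only repair needed is the bookkeeping of item (1): split the non-pendant vertices of $T$ into non-simplicial ones (giving (a)) and simplicial ones (grouped by maximal clique, giving (b)), instead of trying to recover (b) from item (2).
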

\begin{proof}  
Let us consider the tree representation of $G$, which is the rooted tree $T$, as seen in Theorem \ref{theo:quasi-Bapat}.
We must analyse each item of Theorem \ref{theo:quasi-Bapat}.

\begin{itemize}
\item[(1)] $d_G(u_i) + 1$, repeated exactly once for each non-pendant vertex $u_i$.
\end{itemize}

A non-pendant  vertex of $T$ can be a simplicial vertex or a vertex belonging to a minimal vertex separator.
Firstly we are going to consider the non-simplicial vertices, rewriting partially item (1).

\begin{itemize}
\item[(a)] $d_G(v) + 1$, repeated exactly once for each non-simplicial vertex $v$.
\end{itemize}

We know, by Lemma \ref{lem:prop-tree}, that all leaves of $T$ are simplicial vertices and each maximal path 
of $T$ corresponds to a maximal clique.
So, if the maximal clique have only one simplicial, it is the leaf of the path; if there is more than one simplicial,
 it is a non-pendant vertex.
So, for each maximal clique $Q_i$ there is $Simp(Q_i) - 1$ simplicial vertices that are not-pendant vertices.
These vertices have degree $|Q_i| -1$. Item (1) can be completed:

\begin{itemize}
\item[(b)] $|Q_i|$, repeated $Simp(Q_i)-1$ times, $\forall Q_i \in \mathbb Q$.
\end{itemize}

We now consider item (2) of Theorem \ref{theo:quasi-Bapat}.
The value $m_i, 2 \leq i \leq k$, corresponds to  the distance of the vertex $u_i$ from the root vertex $u_1$ of $T$.
As it is repeated $c_i-1$ times, only vertices that have more than one child in $T$ must be observed;
only vertices belonging to minimal vertex separators obey this condition since this means that the vertex
belongs to more than on maximal clique.

Each $mvs$  is represented by a path of the tree $T$.
Let $\langle r, v_2, \ldots, v_p\rangle$ be a path such that $child(v_p)>1$ ($v_p$ is a non-pendant vertex).
Suppose, without loss of generality, that $v_p$ has two children, $a$ and $b$.
There is at least two maximal cliques in the graph: $\{ r, v_2, \ldots, v_p, a,\ldots\}$
and $\{ r, v_2, \ldots, v_p, b, \ldots\}$; $\{ r, v_2, \ldots, v_p\}$ is a $mvs$ $S$.
By Theorem \ref{theo:quasi-Bapat}, $|\{ r, v_2, \ldots, v_p\}| = |S|$ appears as an integer Laplacian eigenvalue,
since it corresponds to the distance of $v_p$ from $r$ plus 1.
The number of children of vertex $v_p$ corresponds to the multiplicity of the $mvs$ plus 1.
So, item (2) is equivalent to:
\begin{itemize}
\item[(c)]  $|S_i|$ repeated  $\mu(S_i)$ times. \qed
\end{itemize}
\end{proof}

\begin{theorem}
The determination of the integer Laplacian eigenvalues of a quasi-threshold graph has linear time complexity.
\end{theorem}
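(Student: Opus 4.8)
The plan is to show that each of the three pieces of data appearing in Theorem~\ref{theo:quasi-structural} --- the degrees of the non-simplicial vertices, the maximal-clique structure $\mathbb{Q}$ together with the simplicial sets $Simp(Q_i)$, and the multiset of minimal vertex separators with their multiplicities --- can be computed in time $O(n+m)$, and then observe that assembling the eigenvalue list from these data is a trivial $O(n)$ pass. First I would invoke the recognition algorithm of Yan {\em et al.} \cite{YCC96}: given $G=(V,E)$ it both decides in linear time whether $G$ is quasi-threshold and, in the affirmative case, outputs the tree representation $T$. Building $G_d$ from $G$ by orienting each edge toward the endpoint of smaller degree is $O(m)$ once the degree sequence is known (ties broken consistently, e.g.\ by vertex index), and constructing $T$ from $G_d$ is $O(n+m)$ as shown there.

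Next I would extract the structural quantities from $T$ directly, rather than re-deriving them from $G$. By Lemma~\ref{lem:prop-tree} every maximal path of $T$ from the root is a maximal clique, its trailing vertices are exactly the simplicial vertices of that clique, and the leading vertices form the associated chain of minimal vertex separators; dually, the branching vertices of $T$ are precisely the vertices lying in more than one maximal clique, i.e.\ in some $mvs$. So a single depth-first traversal of $T$ suffices: at each vertex $u_i$ record its distance $m_i$ from the root and its number of children $c_i$; the non-pendant vertices with $c_i=1$ on a given root-path contribute the $Simp(Q_i)-1$ copies of $|Q_i|$, while the non-pendant vertices with $c_i\geq 2$ contribute the $mvs$ cardinalities $|S_i|=m_i+1$ with multiplicity $\mu(S_i)=c_i-1$ (equivalently, one may cite \cite{MP10} for the linear-time computation of $\mathbb{S}$ and the multiplicities $\mu(S)$ directly). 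The degrees $d_G(v)$ of all vertices are read off in $O(n)$ from the degree array already computed. Hence all of (a), (b), (c) of Theorem~\ref{theo:quasi-structural} are produced in total time $O(n+m)$.

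Finally, the number of eigenvalues produced is exactly $n$ (one zero eigenvalue, and $n-1$ nonzero ones: this is the content of Theorem~\ref{theo:quasi-Bapat}, since $\sum_i 1 + \sum_i (c_i-1)$ over non-pendant $u_i$ telescopes to $k-1=n-1$), so merely listing them with their multiplicities is $O(n)$; no sorting is required if one only asks for the spectrum as a multiset, and even producing $Spec\,L(G)$ in nonincreasing order costs $O(n\log n)$, or $O(n)$ by a counting sort since every eigenvalue is a bounded integer. Combining the linear-time recognition of \cite{YCC96}, the linear-time traversal of $T$, and the linear-time assembly step gives the claim.

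\textbf{Main obstacle.} There is no deep mathematical difficulty here; the only point that needs care is the bookkeeping that translates the tree-based formula of Theorem~\ref{theo:quasi-Bapat} into the structural formula of Theorem~\ref{theo:quasi-structural} without re-scanning $G$, so that the per-clique and per-$mvs$ work sums to $O(n+m)$ rather than, say, $O(\sum_i |Q_i|^2)$. Charging each unit of work to an edge or vertex of $T$ (of which there are $n$ and $n-1$ respectively) handles this, since $T$ has fewer edges than $G$; I would make that charging argument explicit.
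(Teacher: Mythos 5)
Your proposal is correct and follows essentially the same route as the paper: reduce to Theorem~\ref{theo:quasi-structural} and verify that each ingredient (vertex degrees, maximal cliques with their simplicial vertices, minimal vertex separators with multiplicities) is computable in $O(n+m)$ time --- the paper simply cites the general chordal-graph machinery of \cite{MP10} for the cliques and separators, whereas you read everything off the tree representation $T$ produced by the recognition algorithm of \cite{YCC96}. One small slip in your traversal bookkeeping: a non-pendant vertex of $T$ with a single child need not be simplicial (it may be an interior vertex of the separator chain, contributing to item (a) rather than item (b)), so the test should be whether the subtree below the vertex is a path rather than merely whether $c_i=1$; this does not affect the linear-time bound.
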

\begin{proof}
The determination of the degrees of the vertices of any graph takes $O(m)$ complexity time.
For a chordal graph, the determination of maximal cliques and minimal vertex separators
has linear time complexity \cite{MP10}.
Since a simplicial vertex is a vertex that appears in only maximal clique, 
their  determination also has $O(m)$ complexity time. 
So, the results presented in Theorem \ref{theo:quasi-structural} can be
determined in linear time complexity.
\end{proof}


\section{$(k,t)$-split graphs}

In this section we approach a subclass of split graphs, the $(k,t)$-split graphs, which 
were studied by Freitas \cite{F09} and Kirkland {\em et al.} \cite{K10}.
Our results establish a new approach of the results already presented.
It is known that $(k,t)$-split  graphs have, at least, some integer Laplacian eigenvalues and 
we explicitly determine these values  in terms of the graph structure. 

As already seen,  $G=(V,E)$ is a {\em split graph} if $V$ can be partitioned in an independent set and a clique. 
 Such partition may not be unique and it is referred as a {\it split partition} of $V$. 
 However, every connected bipartite split graph has a split partition equals to the degree partition 
 which is called  {\em split degree partition} (or {\em sdp}).  
 A {\em split-complete graph} is a split graph such that each vertex of the independent set is adjacent to every vertex of the clique. 

Let $H$ be a bipartite graph were the degree partition equals the bipartite partition. 
So, $H$ is a regular or a biregular graph. In the last case, $H$ will be called a {\em genuine biregular graph}. 
Observe that there are  biregular bipartite graphs which are not  genuine biregular graphs; for instance, $P_4$.
According to Kirkland {\em et al.} \cite{K10}, every connected  biregular split graph $G$ 
has a split partition which is equal to its degree partition. 
Moreover, $G$ has a unique regular or a unique genuine biregular graph $H$ as a spanning subgraph
called the {\it splitness} of $G$. 
Note that $H$ can be obtained from $G$ taking off all edges in the clique determined by its $sdp$.

 Let $G$ be a split graph, $\mathbb S = \{S_i: 1 \leq i \leq r \}$  its set of minimal vertex separators and 
 $T$-$Simp$ its set of simplicial vertices. 
Graph $G$ is a {\it $(k,t)$-split graph} if and only if the following properties hold:
\begin{itemize}
\item $k= |S_i|, \forall S_i \in \mathbb S$;
\item $S_i \cap S_j = \emptyset, \forall S_i, S_j \in \mathbb S$;
\item $t=|A_i|$ being $A_i$ the set of false twins adjacent to vertices of $S_i$, $1\leq i \leq r$;
\item $\bigcup_{i=1,r} A_i = T$-$Simp$.
\end{itemize}

Figure \ref{fig:split} displays a $(2, 3)$-split graph. In this case, $r=3$. 

If $G$ is a $(k,t)$-split graph, the following statements are valid:

\begin{enumerate}
\item $G$ is a biregular graph.
\item $n=(k + t)r$.
\item For $r >1$, $G$ has a unique split partition. So the split partition equals the degree partition.
\item The splitness $H$ of   $G$ can be obtained taking off 
all edges of the clique  determined by its non-simplicial vertices. 
\item The splitness $H$ of $G$ is isomorphic to $r$ copies of $K_{k,t}$, that is, $H = rK_{k,t}$. 
\item  $G$ is a split-complete graph if and only if $G$ is a $(k,t)$-split graph with $|{\mathbb S}|=1$.
\end{enumerate}

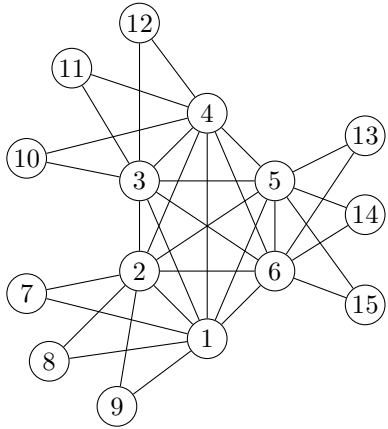
\begin{figure}[h]
\begin{center}
\begin{tikzpicture}
  [scale=.3,auto=left]
\tikzstyle{every node}=[circle, draw, fill=white,
                         inner sep=0pt, minimum width=15pt]
\node (a) at (7,-3){$1$};
\node (b) at (4,0){$2$};
\node (f) at (10,0){$6$};
\node (d) at (7,7) {$4$};
\node (c) at (4,4){$3$};
\node (e) at (10,4){$5$};
\node (g) at (-1,-1){$7$};
\node (h) at (0,-4){$8$};
\node (i) at (3,-6){$9$};
\node (j) at (-1,5){$10$};
\node (k) at (1,9){$11$};
\node (l) at (4,11){$12$};
\node (m) at (14,6){$13$};
\node (n) at (14,2.5){$14$};
\node (o) at (14,-1.5){$15$};
\foreach \from/\to in {a/b, a/c,a/d, a/e,a/f, b/d,b/c, b/e,b/f,c/d, c/e, c/f, d/e,d/f,e/f,  
g/a,g/b,h/a,h/b,i/a,i/b,j/c,j/d,k/c,k/d,l/c,l/d,m/e,m/f,n/e,n/f,o/e,o/f}  
\draw (\from) -- (\to);
\end{tikzpicture}
\caption{A $(2,3)$-split graph}
\label{fig:split}
\end{center}
\end{figure}

We are going to consider a particular labeling of the vertices of a $(k,t)$-split graph, that will be used in the construction
of its Laplacian matrix.
We begin by labeling the vertices of each minimal vertex separator $S_i, 1 \leq i \leq r$.
Then we label each set $A_i,1 \leq i \leq r$.
For instance, the graph depicted in Figure \ref{fig:split} was already labeled in this way.

The Laplacian matrix of a $(k,t)$-split graph $G$ has the form:

$$
 L(G)=
\begin{vmatrix}
(rk+t) \mathbb{I}_{rk} - \mathbb{J}_{rk} & - \mathbb{X}_{rk\times rt} \\
-\mathbb{X}^T_{rt\times rk} & k\mathbb{I}_{rt}\\
\end{vmatrix}
(1)
 $$
 
where $\mathbb{X}$ is a block diagonal matrix with $r$ blocks $\mathbb{J}_{k,t}$. 

The next theorem 
determines the  integer Laplacian eigenvalues of a $(k,t)$-split graph in terms of its structure.
If $r=1$, $G$ is a split-complete graph and, as so, it is a Laplacian integral graph. 
When $r \neq 1$, the proof follows straightforward from Theorem 2.2 in \cite{K10}. 

\begin{theorem}\label{theo:k-t-split}
Let $G$ be a $(k,t)$-split graph and ${\mathbb S} = \{S_1, \ldots, S_r\}$ 
its set of minimal vertex separators.
Then $G$ has, at least, $n-2r+2$ integer Laplacian eigenvalues.
\end{theorem}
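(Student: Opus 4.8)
The plan is to read off the eigenvalues directly from the block form of $L(G)$ in display~(1), treating the cases $r=1$ and $r\ge 2$ separately. When $r=1$ the graph $G$ is a split-complete graph, hence Laplacian integral, and the bound $n-2r+2 = n$ is trivially met; this reduces the work to $r\ge 2$. For $r\ge 2$, I would exhibit a large subspace on which $L(G)$ acts with integer eigenvalues by exploiting the symmetry of the block structure: $\mathbb X$ is block-diagonal with $r$ identical blocks $\mathbb J_{k,t}$, and the $(1,1)$ block is a multiple of $\mathbb I$ minus $\mathbb J$.

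The key steps, in order, would be: (i) \emph{Vectors supported on the simplicial part that are flat on each $A_i$-block.} Inside each block $A_i$ (of size $t$), take any vector summing to zero; there are $t-1$ independent such vectors per block, so $r(t-1)$ in total. On these vectors $\mathbb X^T$ acts as zero (since each column of $\mathbb X^T$ restricted to a block is $\mathbb J_{t,k}$-type, annihilating mean-zero vectors), and $k\mathbb I_{rt}$ gives eigenvalue $k$. So $k$ is an eigenvalue with multiplicity at least $r(t-1)$. (ii) \emph{Vectors supported on the separator part that are flat on each $S_i$-block and mean-zero across blocks appropriately.} Inside each $S_i$-block (size $k$), mean-zero vectors — $r(k-1)$ of them — are killed by $\mathbb J_{rk}$ and by $\mathbb X$ (again because each row-block of $\mathbb X$ is $\mathbb J_{k,t}$), leaving $(rk+t)\mathbb I$; so $rk+t$ is an eigenvalue with multiplicity at least $r(k-1)$. (iii) \emph{Count.} We now have at least $r(t-1) + r(k-1) = r(k+t) - 2r = n - 2r$ integer eigenvalues, using $n=(k+t)r$. (iv) Add the two guaranteed trivial/structural eigenvalues: $0$ (the all-ones vector, since $G$ is connected) is always an integer eigenvalue, and one more integer eigenvalue can be extracted from the remaining $2r$-dimensional ``coarse'' space — e.g. the vector that is constant on the whole clique side and constant (with a suitable value) on the whole simplicial side, or more cleanly the eigenvalue $n$ with eigenvector $(\,(r-1)\text{-ish on the clique},\,-1\text{ on simplicials})$ coming from the join structure; this yields the final count $n - 2r + 2$.

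The main obstacle is step (iv): being careful that the ``extra two'' integer eigenvalues are genuinely there and genuinely integer, rather than coming from the characteristic polynomial of the residual $2r\times 2r$ (or $2\times 2$ after further symmetry reduction) block, whose roots need not be integers in general. Concretely, after quotienting by the mean-zero subspaces of all $2r$ blocks, $L(G)$ reduces to a $2r\times 2r$ matrix acting on (block-average of $S_i$, block-average of $A_i$) coordinates; a further reduction by the symmetry permuting the $r$ copies splits this into an $r$-fold copy of a $2\times 2$ matrix plus a $2\times 2$ ``totally symmetric'' part. The all-ones vector gives $0$ from the symmetric part, and I expect the companion eigenvalue in that $2\times 2$ symmetric block to be $n$ (forced by the join $K_{rk}\nabla$(rest) structure and the fact that $G$ has $rk$ universal... — more precisely the non-simplicial vertices of each $S_i$ are adjacent to all of $A_i$, giving a join that pins down $\mu_1$). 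The $r-1$ remaining copies of the non-symmetric $2\times 2$ block have roots that are generally irrational, which is exactly why the theorem only claims ``at least'' $n-2r+2$. So the proof is essentially: produce $n-2r$ integer eigenvalues by block-local mean-zero vectors, then $0$ and $n$ (or the appropriate join eigenvalue) from the global symmetric mode, and invoke Theorem~2.2 of \cite{K10} to identify the residual block whenever a sharper statement is wanted.
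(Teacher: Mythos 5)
Your route is genuinely different from the paper's. The paper's proof is essentially a citation: it observes that the splitness of $G$ is $rK_{k,t}$, that the off-diagonal block $\mathbb{X}$ in display (1) has rank $r$, and then invokes Theorem 2.2 of \cite{K10} to read off the integer eigenvalues $0$ (multiplicity $1$), $k$ (multiplicity $r(t-1)$), $rk+t$ (multiplicity $r(k-1)$) and $k+t$ (multiplicity $1$), which sum to $n-2r+2$. You instead construct the eigenvectors by hand, and your steps (i)--(iii) are correct and easily verified: mean-zero vectors supported on a single $A_i$ give eigenvalue $k$ (each $A_i$-vertex has degree $k$ and its neighbourhood $S_i$ lies outside the support), mean-zero vectors supported on a single $S_i$ give eigenvalue $rk+t$ (such a vector has total sum zero, so $\mathbb{J}_{rk}$ and the block $\mathbb{J}_{t,k}$ of $\mathbb{X}^T$ both annihilate it), and these $n-2r$ eigenvectors are independent and orthogonal to the two-dimensional ``symmetric'' space. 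This is more self-contained than the paper's argument and arguably more illuminating.

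The one genuine error is in step (iv): the companion eigenvalue to $0$ in the totally symmetric block is \emph{not} $n$, and the ``join structure'' you appeal to does not exist for $r\ge 2$. A vertex of $S_i$ is not adjacent to any vertex of $A_j$ for $j\ne i$, so $G$ has no universal vertices and is not a join of two nonempty graphs; consequently $\overline{G}$ is connected and $n$ is not a Laplacian eigenvalue. The fix is immediate with the machinery you already set up: the partition into the clique $\bigcup_i S_i$ (each vertex having $rk-1$ neighbours inside and $t$ outside) and the independent set $\bigcup_i A_i$ (each vertex having $k$ neighbours inside the clique and $0$ inside the independent set) is equitable, so the span of the two indicator vectors is $L$-invariant with quotient matrix
\[
\begin{pmatrix} t & -t \\ -k & k \end{pmatrix},
\]
whose eigenvalues are $0$ and its trace $k+t$. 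Since one eigenvalue of this $2\times 2$ block is forced to be $0$ by the all-ones vector, the other is automatically the integer $k+t$ --- no case analysis or appeal to \cite{K10} is needed. With that correction your construction reproduces exactly the paper's list of eigenvalues and the count $r(t-1)+r(k-1)+2=n-2r+2$.
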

\begin{proof}
If $r =1$, $G$ is a split-complete graph with $t=n-k$ vertices in its independent set. 
Then the Laplacian eigenvalues follow from Theorem \ref{theo:quasi-structural}. 
The graph has the following eigenvalues: 
$0$ with multiplicity $1$, $k$ with multiplicity $t -1$ and $k+t$ with multiplicity $k$. 
If $r \neq 1$, the splitness of $G$ is $H=rK_{k,t}$. 
According to  (1), 
the rank of block $\mathbb{X}$ of matrix $L(G)$ is $r$, the number of copies of $K_{k,t}$. 
Then by Theorem 2.2 \cite{K10}  the following integer Laplacian eigenvalues of $G$ are determined: 
$0$ with multiplicity $1$, $k$ with multiplicity $r(t-1)$; $rk+t$ with multiplicity $r(k-1)$ and $k+t$ with multiplicity $1$.  

As $n= (k+t)r$, then $r(t -1) +  r(k +1) +2 = n-2r+2$. 
\qed
\end{proof}

Form Theorem \ref{theo:k-t-split}, it is immediate to see that a $(k,t)$-split graph $G$ has, 
at least, the following integer Laplacian eigenvalues:
\begin{itemize}
\item[--] $0$ with multiplicity $1$,
\item[--] $k$ with multiplicity $r(t - 1)$,
\item[--] $rk + t$ with multiplicity $r(k - 1)$,
\item[--] $k + t$ with multiplicity $1$.
\end{itemize}

Observe that  $(k,t)$-split graphs have an interesting property:
for $r=1$, the graph is a split-complete graph and $a(G) = \kappa(G)$;
for $r >1$, $a(G)$ is always different from $\kappa(G)$ since, in this case,
there is not a minimal vertex separator composed by universal vertices (Theorem \ref{theo:chordal}).


\section{Simplicial vertices and  Laplacian eigenvalues}

For any block graph $G$, Bapat {\em et al.} \cite{BLP12} had shown  
that, under some conditions, the size of a maximal clique is a Laplacian eigenvalue of $G$. 
This fact was also observed in Abreu {\em et al.} \cite{Ab18} for the block-indifference graphs. 
For quasi-threshold graphs, we highlight (Theorem \ref{theo:quasi-structural}) the importance of the number of simplicial vertices in 
the determination of such eigenvalues.
In this section we generalize these results proving that, 
for any chordal graph $G$, the cardinality of all maximal cliques with at least two simplicial vertices
is an integer Laplacian eigenvalue of $G$;
their multiplicities are also determined.
The set of simplicial vertices of $Q_i$, $Simp(Q_i)$, plays a decisive role.

\begin{theorem}\label{theo:integer-eigen2}
Let $G=(V,E)$ be a connected chordal graph and let
 $\mathbb{Q }= \{ Q_j, 1 \leq j \leq \ell \}$, be the set of maximal cliques of $G$.
If $|Simp(Q_i)| \geq 2$, $1 \leq i \leq \ell$, then  $|Q_i|$ is an eigenvalue of $L(G)$ 
with multiplicity at least $|Simp(Q_i)|-1$.
\end{theorem}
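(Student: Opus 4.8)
The plan is to exploit the observation that the simplicial vertices lying in a common maximal clique are pairwise \emph{true twins}, and then to invoke the elementary fact that each pair of true twins contributes an explicit eigenvector to the Laplacian. Chordality is only needed to guarantee that the notions involved behave well; the argument itself is essentially linear algebra.

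First I would pin down the local structure of a simplicial vertex. If $v$ is simplicial and $v \in Q_i$, then $N_G[v]$ is a clique containing $Q_i$; by maximality of $Q_i$ this forces $N_G[v] = Q_i$, so that $N_G(v) = Q_i \setminus \{v\}$ and $d_G(v) = |Q_i| - 1$. Consequently, if $u$ and $v$ are two simplicial vertices of the same maximal clique $Q_i$, then $N_G[u] = Q_i = N_G[v]$, i.e. $u$ and $v$ are true twins; in particular $u \sim v$.

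Next I would record the standard eigenvector fact: whenever $u$ and $v$ are true twins of $G$, the vector $e_u - e_v$ (difference of the corresponding standard basis vectors of $\mathbb{R}^n$) is an eigenvector of $L(G) = D(G) - A(G)$ for the eigenvalue $d_G(u) + 1$. This is a coordinate-wise check: on coordinate $u$ one gets $d_G(u) + 1$ (using $u \sim v$), on coordinate $v$ one gets $-(d_G(v) + 1) = -(d_G(u)+1)$, and on every other coordinate $w$ the two contributions cancel because $N_G(u) \setminus \{v\} = N_G(v) \setminus \{u\}$ means $w \sim u \iff w \sim v$.

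Finally I would assemble the conclusion. Fix one simplicial vertex $v_1$ of $Q_i$ and let $v_2, \dots, v_s$ be the remaining ones, where $s = |Simp(Q_i)| \geq 2$. Each $e_{v_1} - e_{v_j}$, for $2 \leq j \leq s$, is an eigenvector of $L(G)$ for the eigenvalue $d_G(v_1) + 1 = (|Q_i| - 1) + 1 = |Q_i|$, and these $s-1$ vectors are linearly independent (examining coordinate $v_j$ kills the $j$-th coefficient). Hence $|Q_i|$ is a Laplacian eigenvalue of $G$ of multiplicity at least $s - 1 = |Simp(Q_i)| - 1$. The only step requiring care is the first one: maximality of $Q_i$ is exactly what lets us pass from $N_G[v] \supseteq Q_i$ to $N_G[v] = Q_i$; once the true-twin structure is in hand, the rest is routine.
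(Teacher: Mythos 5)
Your proposal is correct and follows essentially the same route as the paper's own proof: observe that simplicial vertices of a common maximal clique are true twins with degree $|Q_i|-1$, check that each difference of standard basis vectors $e_{v_1}-e_{v_j}$ is an eigenvector for the eigenvalue $|Q_i|$, and note that these $|Simp(Q_i)|-1$ vectors are linearly independent. Your justification that maximality of $Q_i$ forces $N_G[v]=Q_i$ is in fact slightly more explicit than the paper's, but the argument is the same.
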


\begin{proof}
Let $G$ be a non-complete chordal graph and $Q_i \in \mathbb{Q}$, $|Q_i| = n_i$. 
Let $u, v$ be two simplicial vertices belonging $Q_i$.
Since $G$ is a chordal graph and $u$ and $v$ are simplicial vertices in the same clique, they are true twins. 
So $N[u]=N[v]$ and, as $|Q_i| = n_i$,  then $d_G(u)=d_G(v)=n_i-1$. 
Considering  the entries of $L(G)$, we observe that: 

\begin{equation}\label{eq:laplacian-matrix}
L(G)_{u,u}=L(G)_{v,v} = d_G(u) = d_G(v)  \,\, \mbox{and} \,\, L(G)_{u,v}=L(G)_{v,u}=-1 
\end{equation}

Let ${\bf y}$ be the $n$-dimensional vector such that ${\bf y}(u)=1$, ${\bf y}(v)=-1$ and let the remaining entries be $0$.
From the values of entries of $L(G)$ given by  (\ref{eq:laplacian-matrix}), we conclude that $L(G) {\bf y} = n_i {\bf y}$. 
So, $n_i$ is an integer Laplacian eigenvalue of $G$.

For every $j$, $1 \leq j \leq \ell$, let $|Simp(Q_j)| = \{u_1^j, ...., u_{p_j}^j\}$.
Suppose $i \in \{1,\dots,\ell  \}$ such that  $p_i \geq 2$.
For all $r$, $2 \leq r \leq p_i$ build $p_i - 1$ vectors ${\bf y_r^i}$ such that 
${\bf y^i_r}(u_1^i)=1$, ${\bf y^i_r}(u_r^i)= - 1$ with all the remaining entries equal to zero.

The vectors ${\bf y^i_r}$,  $2 \leq r \leq p_i$,  are linearly independent. 
Besides, the vectors ${\bf y^i_r}$, $2 \leq r \leq p_i$,
satisfy $L(G){\bf y^i_r} = n_i {\bf y^i_r }$, so  they are eigenvectors of $L(G)$ corresponding to 
the eigenvalue $n_i$. 
Then,  the multiplicity of $n_i$ as Laplacian eigenvalue 
of $G$ is $p_i - 1$ and the result follows. 
\qed
\end{proof}

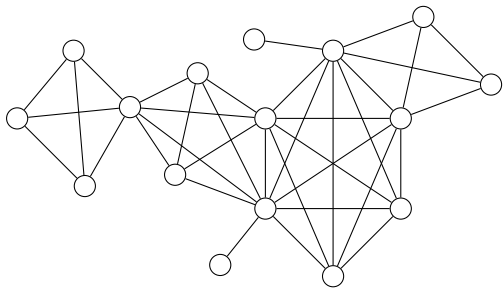
\begin{figure}
\begin{center}
\begin{tikzpicture}
  [scale=.3,auto=left]
\tikzstyle{every node}=[circle, draw, fill=white,
                         inner sep=0pt, minimum width=8pt]
\node (a1) at (-5,1){};
\node (b1) at (-8,4){};
\node (c1) at (-5.5,7){};
\node (a) at (6,-3){};
\node (b) at (3,0){};
\node (c) at (3,4){};
\node (d) at (6,7) {};
\node (e) at (9,4){};
\node (f) at (9,0){};
\node (x) at (0,6){};
\node (y) at (-3,4.5){};
\node (z) at (-1,1.5){};
\node (1) at (10,8.5){};
\node (2) at (13,5.5){};
\node (3) at (2.5,7.5){};
\node (4) at (1,-2.5){};
\foreach \from/\to in {a/b, a/c,a/d, a/e,a/f, b/d,b/c, b/e,b/f,c/d, c/e, c/f, d/e,d/f,e/f,  
x/y,x/z,y/z,x/b,x/c,y/b,y/c,z/b,z/c,
1/2,1/e,2/e,1/d,2/d,4/b,3/d,
a1/b1,a1/y,b1/y,a1/c1,b1/c1,c1/y} 
\draw (\from) -- (\to);
\end{tikzpicture}
\caption{A chordal graph with $\ell=6$}
\label{fig:simp-chordal}
\end{center}
\end{figure}

As an example, the Laplacian spectrum of $G$, presented in Figure \ref{fig:simp-chordal},  is:

$SpecL(G) = [-,00000, ,53913, ,86586, 1,10090, 1,79600, 3,06443, 4,00000,$

$ 4,00000,4,00000, 5,00000, 6,00000, 7,41877, 7,84776, 8,71138, 9,35720,$ 

$ 10,29857]$.

From Theorem \ref{theo:integer-eigen2}, the following  integer Laplacian eigenvalues of $G$ are directly determined: 
$4$ with multiplicity $3$, 
$5$ with multiplicity $1$ and  $6$   with multiplicity $1$.


\section{Conclusions}

Several subclasses of chordal graphs were mentionned in this paper.
Figure \ref{fig:hier}  shows how they relate.

\begin{figure}[h]
\begin{center}
\begin{tikzpicture}  [scale=.6]
\tikzstyle{block} = [rectangle, draw,  
text width=5.8em, text centered, rounded corners, minimum height=2em]
\tikzstyle{line} = [draw, -latex']   
    
\node[block] (1)      at (0,0)  {chordal};
\node[block] (1b)      at (8,0)  {cograph};
\node[block] (2a)    at (-3.5,-2) {split};
\node[block] (2b)    at (2,-2) {graphs with $a(G)=\kappa(G)$};
\node[block] (3a)    at (4.5,-4.5) {quasi-threshold};
\node[block] (3c)    at (-4.5,-7) {$(k,t)$-split};
\node[block] (5a) at (1,-7) {threshold};
\node[block] (5b)    at (7,-7) {gen core-sat};
\node[block] (5c)    at (3,-10) {split-complete};
\node[block] (6b)    at (10,-10) {windmill};
    
\draw (1)--(2a)--(5a);
\draw(2a)--(3c);
\draw  (1)--(2b)--(3a); 
 \draw (1b)--(3a)--(5a); 
 \draw (3c)--(5c);
 \draw (5a)--(5c);
 \draw (3a)--(5b); 
 \draw (5b)--(5c);
 \draw (5b)--(6b);
\end{tikzpicture}
\end{center}
\caption{A hierarchy of chordal graphs} 
\label{fig:hier}
\end{figure}
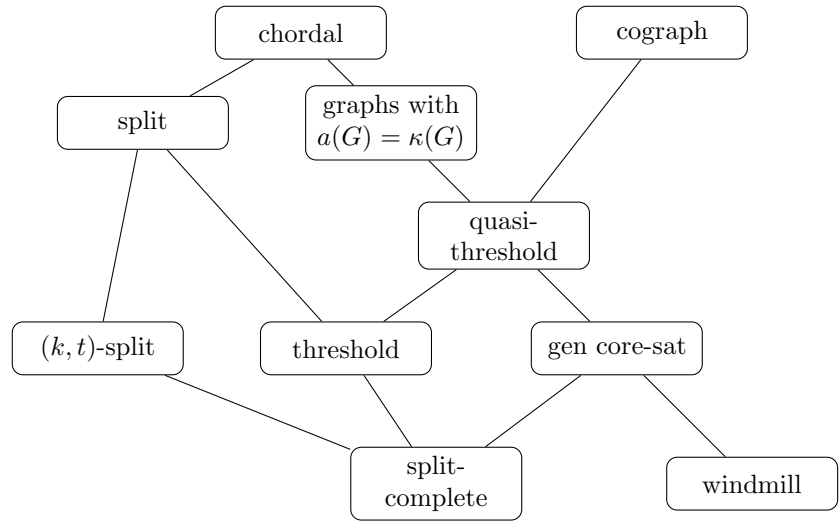

In all these subclasses 
we could notice the importance of the graph structure, its minimal vertex separators,
its simplicial vertices and the maximal cliques.
Other subclasses can be explored in the future.


\end{document}